\documentclass[
%paper=a4,
%pointlessnumbers,
oneside,
%openright,
%headsepline,
reqno,
11pt
]
{amsart}
\usepackage[utf8]{inputenc}
\usepackage{amsmath}
\usepackage{amsfonts}
\usepackage{amssymb}
\usepackage{amsthm}
\usepackage{bbm}
\usepackage[a4paper]{geometry}
\usepackage{color}
\usepackage{latexsym}
\usepackage{bigints}
\usepackage[colorlinks=true,linkcolor=blue,citecolor=mauve]{hyperref}

%%%%%%%%%%%%%%%%%%%%%%%%%%%%%%%%%%%%%%%% COLORS %%%%%%%%%%%%%%%%%%%%%%%%%%%%%%%%%%%%%%%%

\definecolor{mauve}{rgb}{0.58,0,0.82}

%%%%%%%%%%%%%%%%%%%%%%%%%%%%%%%%%%%%%%%% CMD %%%%%%%%%%%%%%%%%%%%%%%%%%%%%%%%%%%%%%%%

\def\R{\mathbb R}

\usepackage{esint}

\let\phi\varphi
\let\epsilon\varepsilon
\newtheorem{thm}{Theorem}[section]
\newtheorem{prop}[thm]{Proposition}

\theoremstyle{definition}

\newtheorem*{rem}{Remark}

\numberwithin{equation}{section}
%\usepackage{titlesec, blindtext, color}
%\definecolor{gray75}{gray}{0.75}

\address{Simon Barth, Institut f\"ur Analysis, Dynamik und Modellierung, Universit\"at Stuttgart, Pfaffenwaldring 57, D-70569 Stuttgart, Germany}
\email{simon.barth@mathematik.uni-stuttgart.de}

\address{Andreas Bitter, Institut f\"ur Analysis, Dynamik und Modellierung, Universit\"at Stuttgart, Pfaffenwaldring 57, D-70569 Stuttgart, Germany}
\email{andreas.bitter@mathematik.uni-stuttgart.de}

\author{Simon Barth and Andreas Bitter}
\date{}
\title[]{\ \textbf{Decay rates of bound states at the spectral threshold of multi-particle Schr\"odinger operators} \ \\ }
\begin{document}
\maketitle 
\begin{abstract} 
We consider $N$-body Schr\"odinger operators with $N\geq3$ particles in dimension $d\geq 3$ in the critical case when the lowest eigenvalue coincides with the bottom of the essential spectrum of the operator. We give the asymptotic behaviour of the corresponding eigenfunction in dependence of the dimension and the number of particles of the system.
\end{abstract}
\pagestyle{headings}
\renewcommand{\sectionmark}[1]{\markboth{#1}{}}
\section{Introduction}
It is well known that if a Schr\"odinger operator has an eigenvalue lying below its essential spectrum, then the corresponding eigenfunctions decay exponentially \cite{agmon}. However, the situation changes completely at the threshold of the essential spectrum. In the work at hand we consider a multi-particle Schr\"odinger operator
\begin{equation}\label{Definition Hamiltonian}
H_N(\lambda) = - \sum\limits_{i=1}^N \frac{1}{2m_i} \Delta_{x_i} + \lambda \sum\limits_{1\le i<j\le N} V_{ij}(x_{ij})
\end{equation}
with a coupling constant $\lambda>0$. After the separation of the center of mass motion this operator can be written as
\begin{equation}
H_0(\lambda)=-\Delta_0+\lambda\sum\limits_{1\le i<j\le N} V_{ij}(x_{ij}),
\end{equation}
where $\Delta_0$ is the Laplace-Beltrami operator on the space of relative motion of the system. We study the case when $\lambda$ takes its critical value, i.e. for $0<\lambda \leq \lambda_0$ the spectrum of the Hamiltonian $H_0(\lambda)$ coincides with the half-line $[0,\infty)$ and for $\lambda>\lambda_0$ sufficiently close to $\lambda_0$ the operator $H_0(\lambda)$ has negative eigenvalues. The decay properties of solutions of the Schr\"odinger equation corresponding to the critical coupling constant $\lambda_0$ play an important role in different physical phenomena, see for example \cite{klaussimon1} and \cite{klaussimon2}. They are in particular crucial for the existence and non-existence of the so-called Efimov effect \cite{cmp}. In a recent work \cite{second} it was shown that in case of short-range potentials for $\lambda=\lambda_0$ the operator $H_0(\lambda)$ has an non-degenerate eigenvalue at zero, where the corresponding eigenfunction $\varphi_0$ satisfies
\begin{equation}
(1+|x|)^\alpha \varphi_0 \in L^2 \qquad \text{for any} \qquad \alpha < \frac{d(N-2)}{2}-2.
\end{equation}
Here $N\geq 3$ is the number of particles and $d\geq 3$ is the dimension. However, the question remains open whether this estimate from below on the decay rate of $\varphi_0$ is close to the optimal one.
\\ In this work we answer this question and give an explicit formula for the asympotitics of the eigenfunction $\varphi_0$ for large distances to the center of mass.

\section{Notation and Main Result}
We consider a system of $N\ge 3$ particles in dimension $d\ge 3$ with masses $m_i>0$ and position vectors $x_i \in \R^d, \ i=1,\dots, N$. Such a system is described by the Hamiltonian \eqref{Definition Hamiltonian}, where $\lambda>0$ is a real parameter, $x_{ij}=x_i-x_j$ and the potentials $V_{ij}$ describe the pair interactions. We assume that $V_{ij}$ satisfy
\begin{equation}\label{Condition potential decay}
	\vert V_{ij}(x_{ij})\vert\le c\vert x_{ij} \vert^{-2-\nu}, \quad x_{ij}\in \mathbb{R}^d, \ \vert x_{ij}\vert \ge A
\end{equation}
for some constants $c,\nu, A>0$ and
\begin{equation}\label{Condition potential local}
\begin{cases}
	V_{ij}\in L_{\mathrm{loc}}^{2}(\mathbb{R}^d),  &\text{if } d=3,\\  V_{ij}\in L_{\mathrm{loc}}^{p}(\mathbb{R}^d) \ \text{for some } p> 2,& \text{if } d= 4,\\ V_{ij}\in L_{\mathrm{loc}}^{\frac{d}{2}}(\mathbb{R}^d), & \text{if } d\ge 5.
	\end{cases}
\end{equation}
Under these assumptions on the potentials the operator $H_N$ is essentially self-adjoint on $L^2(\mathbb{R}^{dN})$. Following \cite{Sigalov}, we define the space $R_0$ of relative motion of the system as
\begin{equation}
	R_0=\left\{x=(x_1,\dots,x_N) \in \mathbb{R}^{dN} \, : \, \sum_{i=1}^N m_ix_i =0\right\}
\end{equation}
and the scalar product
\begin{equation}
	\langle x,\tilde x\rangle_1=\sum_{i=1}^N m_i\langle x_i,\tilde x_i\rangle, \qquad \vert x\vert_1^2 = \langle x,x\rangle_1.
\end{equation}
For a fixed pair of particles $i\neq j$ we set
\begin{equation}\label{eq: Rij}
	R_{ij} = \{x=(x_1,\dots,x_N)\in R_0\,:\, m_ix_i+m_jx_j=0\} \quad \text{and} \quad R_{ij}^\perp =R_0\ominus R_{ij}.
\end{equation}
Let $P_{ij}$ and $P_{ij}^\perp$ be the projections in $R_0$ with respect to the scalar product $\langle\cdot,\cdot\rangle_1$ onto $R_{ij}$ and $R_{ij}^\perp$, respectively. For $x\in R_0$ we denote
\begin{equation}
	q_{ij}= P_{ij} x\qquad \qquad   \text{and} \qquad \qquad \xi_{ij} = P_{ij}^\perp x.	
\end{equation}
Note that for any $1\le i<j\le N$ it holds
\begin{equation}\label{eq: qij and xij}
	|q_{ij}|_1 = \frac{\sqrt{m_im_j}}{\sqrt{m_i+m_j}}{|x_{ij}|},
\end{equation}
which together with \eqref{Condition potential decay} implies
\begin{equation}\label{eq: potential estimate q}
	\vert V_{ij}(x_{ij})\vert\le C |q_{ij}|_1^{-2-\nu} \quad \text{ for some } C>0 \text{ and all } |x_{ij}|\ge A.
\end{equation}
In the following we denote
\begin{equation}
	V(x) = \sum_{1\le i<j\le N} V_{ij}(x_{ij}).
\end{equation}
By $\Delta_0$ we denote the Laplace Beltrami operator on $L^2(R_0)$. Then the Hamiltonian of the system after separation of the center of mass is given by
\begin{equation}
 H_0(\lambda)=-\Delta_0+ \lambda V(x).
\end{equation}
In this work we consider systems for which the following important condition is fulfilled.
\\ \textbf{Assumption:}
We consider the case when $\lambda$ takes its critical value $\lambda_0$, i.e. for $\lambda \le \lambda_0$ the spectrum of $H_0(\lambda)$ is $[0,\infty)$ and for $\lambda>\lambda_0$ the operator $H_0(\lambda)$ has negative spectrum. We assume that for all $\lambda>\lambda_0$ sufficiently close to $\lambda_0$ the essential spectrum of the operator $H_0(\lambda)$ coincides with the half line $[0,\infty)$. For such $\lambda$ the negative spectrum is purely discrete. Without loss of generality we set $\lambda_0=1$ and we write $H_0$ instead of $H_0(\lambda_0)$. In \cite{second} it was shown that under these assumptions in case of $d\ge 3$ and $N\ge 3$ zero is a non-degenerate eigenvalue of $H_0$. The following theorem gives the asymptotic behavior of the corresponding eigenfunction for large arguments. 
\begin{thm}\label{thm: decay of virtual levels}
Assume that $H_0$ satisfies the conditions described above. Suppose that $\varphi_0$ is an eigenfunction of $H_0$ corresponding to the eigenvalue zero. Then the following assertions hold. 
\begin{enumerate}
 	\item[\textbf{(i)}] For all $1\le i<j\le N$ we have
 	\begin{equation}\label{eq: 1+x gamma}
 	V_{ij}(x_{ij})\varphi_0(x)\in L^1(R_0).
 	\end{equation}
 	\item[\textbf{(ii)}] Let $\beta=d(N-1)-2$ and denote by $|\mathbb{S}^{\beta-1}|$ the volume of the unit sphere in $\mathbb{R}^\beta$. Further, let
 	\begin{equation}
C_0=-\frac{1}{(\beta-2)|\mathbb{S}^{\beta-1}|}\int\limits_{R_0} \sum_{1\le i<j\le N}V_{ij}(x_{ij})\varphi_0(x)\,\mathrm{d}x.
	\end{equation}
 	Then the function $\varphi_0$ has the following asymptotics
\begin{equation}\label{eq: expansion phi0}
	\varphi_0(x) = \frac{C_0}{\vert x\vert_1^{\beta}} +g(x) \quad \text{as } \quad \vert x\vert_1 \rightarrow \infty,
	\end{equation} 
where the remainder $g$ belongs to $L^p(R_0)$ for any $p$ satisfying
\begin{equation}\label{eq: p}
\frac{\beta+2}{\beta+\frac{\gamma^\ast}{1+\gamma^\ast}}<p<\frac{\beta+2}{\beta} \qquad \text{with} \quad \gamma^{\ast}=\min\left\{\frac{d}{2}-1,\nu \right\}.
\end{equation}	
\end{enumerate}
\end{thm}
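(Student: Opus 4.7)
\emph{Overall strategy.} Converting $H_0\varphi_0=0$ into an integral identity via the Newton potential drives the whole proof. Denoting the dimension of $R_0$ by $n := d(N-1) = \beta+2 \ge 3$, the fundamental solution of $-\Delta_0$ is a scalar multiple of $|z|_1^{-\beta}$. Once assertion (i) gives $V\varphi_0 \in L^1(R_0)$ and standard elliptic regularity gives $\varphi_0(x)\to 0$ at infinity, a Liouville-type argument yields
\begin{equation*}
\varphi_0(x) \;=\; -\frac{1}{\beta\,|\mathbb{S}^{n-1}|}\int_{R_0}\frac{V(y)\varphi_0(y)}{|x-y|_1^{\beta}}\,\mathrm{d}y,
\end{equation*}
since the candidate difference with the Newton potential is a bounded harmonic function on $R_0$ vanishing at infinity, hence zero. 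This identity is the engine for both parts.

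\emph{Proof of (i).} Local integrability of $V_{ij}\varphi_0$ follows from \eqref{Condition potential local} once $\varphi_0$ is upgraded to $L^\infty_{\mathrm{loc}}$ by elliptic regularity for zero-energy eigenfunctions. For the tail, I would decompose $R_0 = R_{ij}\oplus R_{ij}^\perp$ and write $x=(q_{ij},\xi_{ij})$; the potential depends only on $q_{ij}$ and obeys \eqref{eq: potential estimate q}, so Fubini reduces matters to bounding $\int|\varphi_0(q_{ij},\xi_{ij})|\,\mathrm{d}\xi_{ij}$ against the weight $|q_{ij}|_1^{-2-\nu}$. I would combine the polynomial-weighted estimate $(1+|x|_1)^\alpha\varphi_0\in L^2$ from \cite{second}, applied via Cauchy--Schwarz on slices, with---when $\alpha$ is too small for direct slice integrability---a short bootstrap on the representation above that iteratively upgrades the pointwise decay of $\varphi_0$ until the slice integral converges.

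\emph{Proof of (ii): leading term.} With $V\varphi_0\in L^1$ in hand, add and subtract the value of the kernel at $y = 0$:
\begin{equation*}
\varphi_0(x) - \frac{C_0}{|x|_1^{\beta}} \;=\; -\frac{1}{\beta\,|\mathbb{S}^{n-1}|}\int_{R_0}\!\left(\frac{1}{|x-y|_1^{\beta}} - \frac{1}{|x|_1^{\beta}}\right)V(y)\varphi_0(y)\,\mathrm{d}y \;=:\; g(x),
\end{equation*}
the prefactor being consistent with the definition of $C_0$ in the statement. The remaining task is to control $\|g\|_{L^p}$ in the range \eqref{eq: p}.

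\emph{Remainder and main obstacle.} Split the $y$-integration at $|y|_1 = |x|_1/2$. On the inner region a first-order Taylor expansion gives $\bigl||x-y|_1^{-\beta}-|x|_1^{-\beta}\bigr| \lesssim |y|_1/|x|_1^{\beta+1}$, producing a contribution of order $|x|_1^{-(\beta+1)}$ times a weighted moment of $V\varphi_0$; this lies in $L^p$ as soon as $p>(\beta+2)/(\beta+1)$, which is inside the stated range. The outer region is the hard part: the two kernel pieces must be estimated separately and both require quantitative control of the tail $\int_{|y|_1>|x|_1/2}|V\varphi_0|\,\mathrm{d}y$. I would obtain this by feeding the partial asymptotic $\varphi_0 \approx C_0|x|_1^{-\beta}$ back into the anisotropic decay $|V_{ij}| \lesssim |q_{ij}|_1^{-2-\nu}$; the fraction $\gamma^*/(1+\gamma^*)$ in \eqref{eq: p} then emerges as the optimal balance between the direct potential decay $\nu$ and the transverse-integration gain $d/2-1$ available because $V_{ij}$ only decays in the $q_{ij}$-direction. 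Carrying out this bootstrap while respecting the anisotropic decomposition $R_0 = R_{ij}\oplus R_{ij}^\perp$ is the central technical core of the theorem.
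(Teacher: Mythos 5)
Your skeleton---derive the Newtonian-potential representation of $\varphi_0$ from $H_0\varphi_0=0$ together with $V\varphi_0\in L^1$, then expand the kernel to peel off $C_0|x|_1^{-\beta}$---is the paper's route (the paper invokes Theorem 6.21 of \cite{Loss} rather than a Liouville argument, but that is the same step). The genuine gap is in how you control $V\varphi_0$. The paper does not merely prove $V\varphi_0\in L^1$; it proves the weighted bound $(1+|x|_1)^{\gamma}V_{ij}\varphi_0\in L^1(R_0)$ for every $\gamma<\gamma^\ast$ (Proposition \ref{Prop 1}), and this one estimate powers all of part (ii): it gives the tail bound $\int_{|y|_1>|x|_1^{\eta}}|V\varphi_0|\,\mathrm{d}y\lesssim|x|_1^{-\eta\gamma}$ with no bootstrap, and it is what places the far region and the replacement of $\int_{\Omega_2(x)}$ by $\int_{R_0}$ into the stated $L^p$ range. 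The mechanism behind it is the ingredient your sketch is missing: the input from \cite{second} is the gradient bound $\nabla_0(|x|_1^{\alpha}\varphi_0)\in L^2$ for $\alpha<\beta/2$, and applying Hardy's inequality \emph{in the pair subspace} $R_{ij}$ (via $|\nabla_{q_{ij}}|\le|\nabla_0|$) converts it into $(1+|q_{ij}|_1)^{-1}(1+|x|_1)^{\alpha}\varphi_0\in L^2$, i.e.\ extra decay precisely in the direction in which $V_{ij}$ decays. Your substitute---Cauchy--Schwarz on slices against a plain weighted bound $(1+|x|_1)^{\alpha}\varphi_0\in L^2$---does not close: the range of $\alpha$ actually available for $\varphi_0$ itself (e.g.\ $\alpha<\frac{d(N-2)}{2}-2$ from \cite{second}, which is negative for $d=N=3$) is too weak, and after distributing the weight between $R_{ij}$ and $R_{ij}^{\perp}$ the slice integrals diverge in low dimension. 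Your fallback, a ``short bootstrap on the representation above,'' is circular, since the representation is only obtained after $V\varphi_0\in L^1$ is known.

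In part (ii) there are two further soft spots. Your inner-region bound, ``$|x|_1^{-(\beta+1)}$ times a weighted moment of $V\varphi_0$,'' tacitly assumes a finite first moment $\int|y|_1|V\varphi_0|\,\mathrm{d}y$, which is not available when $\nu<1$: only moments of order $\gamma<\gamma^\ast\le\nu$ are finite. This is exactly why the paper cuts the inner region at $|y|_1\le|x|_1^{\eta}$ with $\eta=\frac{1}{1+\gamma^\ast}$; the exponent $\frac{\gamma^\ast}{1+\gamma^\ast}=1-\eta$ in \eqref{eq: p} arises from balancing the kernel-expansion error $|x|_1^{\eta-1}$ against the outer-region decay $|x|_1^{-\eta\gamma}$, not from the interplay of $\nu$ with $d/2-1$ (that only produces $\gamma^\ast$ itself). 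Second, your tail control ``feed $\varphi_0\approx C_0|x|_1^{-\beta}$ back in'' invokes a pointwise form of the conclusion that the theorem does not supply (the remainder is only $L^p$) and that is not otherwise established. Both issues evaporate once the weighted $L^1$ estimate is proved first---with it your split at $|y|_1=|x|_1/2$ would in fact go through---but as written the proposal lacks the estimate that makes any of this rigorous.
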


\begin{rem}
\begin{enumerate}
\item[\textbf{(i)}] Note that $\varphi_0$ can be chosen to be strictly positive. If for all $V_{ij}$ it holds $V_{ij}(x)\leq 0$, then we have $C_0\not =0$. In this case the leading term $C_0|x|_1^{-\beta} \chi_{\{|x|_1 > 0\}}$ belongs to $L^q(R_0)$, only if $q> \frac{\beta+2}{\beta}$.
\item[\textbf{(ii)}] \eqref{eq: expansion phi0} shows that the decay rate of $\varphi_0$ does not depend on the potentials as long as the pair potentials are short-range. At the same time, since $|x|_1=\sum_{i=1}^N m_ix_i$, the decay of $\varphi_0$ depends on the direction, if the masses are not equal.
\end{enumerate}
\end{rem}

\begin{proof}[Proof of Theorem \ref{thm: decay of virtual levels}]
We will split the proof of the theorem into several propositions. The statement of the following proposition was proved in \cite{second}.
\begin{prop}\label{prop from second}
The function $\varphi_0$ satisfies 
	\begin{equation}
		\nabla_0\left(|x|_1^\alpha\varphi_0\right)\in L^2(R_0) \qquad \ \ \text{ for any }\quad 0\le \alpha< \frac{d(N-1)-2}{2}.
	\end{equation}
\end{prop}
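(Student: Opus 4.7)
The exponent $\alpha = (d(N-1)-2)/2 = (n-2)/2$ with $n = d(N-1) = \dim R_0$ is exactly the critical Hardy exponent on $\R^n$, so I would base the proof on a weighted energy argument that combines a virial (IMS) identity with the critical Hardy inequality on $R_0$. The heuristic is that at threshold $H_0\varphi_0 = 0$ the eigenfunction $\varphi_0$ behaves like a Newton-type potential of the source $-V\varphi_0$ and therefore decays like $|x|_1^{-(n-2)}$, which is exactly the decay that makes $|x|_1^\alpha \varphi_0$ borderline $H^1$ as $\alpha \uparrow (n-2)/2$.

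The core identity, valid for any real Lipschitz weight $F$ of polynomial growth with $F\varphi_0 \in H^1(R_0)$, is obtained by multiplying $H_0\varphi_0 = 0$ by $F^2 \varphi_0$ and integrating by parts:
\[
\int_{R_0}|\nabla_0(F\varphi_0)|^2\, dx = \int_{R_0}|\nabla_0 F|^2 \varphi_0^2\, dx - \int_{R_0} F^2 V \varphi_0^2\, dx.
\]
I would apply this with $F_R(x):=\min(|x|_1, R)^\alpha$, so that $F_R$ is bounded and $|\nabla_0 F_R|^2 \leq \alpha^2 |x|_1^{2\alpha-2}$. The Hardy inequality $\int |x|_1^{-2}(F_R\varphi_0)^2\,dx \leq \frac{4}{(n-2)^2}\int|\nabla_0(F_R\varphi_0)|^2\,dx$ then absorbs the first term on the right into the left, producing
\[
\Big(1 - \tfrac{4\alpha^2}{(n-2)^2}\Big)\int_{R_0} |\nabla_0(F_R\varphi_0)|^2\, dx \leq -\int_{R_0} F_R^2 V \varphi_0^2\, dx,
\]
with strictly positive prefactor precisely when $\alpha < (n-2)/2 = \beta/2$.

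It remains to bound the right-hand side uniformly in $R$. For each pair $(i,j)$ I split $x = q_{ij} + \xi_{ij}$ using \eqref{eq: Rij}, use the tail bound $|V_{ij}(q)| \leq C|q|_1^{-2-\nu}$ away from $q=0$ together with the local integrability \eqref{Condition potential local} near $q=0$, and decompose the weight as $|x|_1^{2\alpha} \leq C_\alpha (|q_{ij}|_1^{2\alpha} + |\xi_{ij}|_1^{2\alpha})$. The $q$-piece, where $|q|^{2\alpha}|V_{ij}(q)|$ has controlled tail decay and integrable local singularity, contributes a finite quantity depending only on $\|\varphi_0\|_{L^2}$. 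The $\xi$-piece, where the weight lives in directions not damped by $V_{ij}$, is controlled by a self-improving bootstrap: once the estimate is known for some $\alpha_0 < \beta/2$ it yields weighted $L^2$-control of $\varphi_0$ in the $\xi_{ij}$-fibres (via Hardy on the transverse $d(N-2)$-dimensional subspace), which feeds back into the bound for $\alpha$ slightly above $\alpha_0$. Passing to the limit $R \to \infty$ by monotone convergence then concludes.

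The principal technical obstacle is exactly the anisotropy of $V_{ij}$, which decays only in its $d$-dimensional subspace: the weight $|x|_1^{2\alpha}$ grows in the transverse $d(N-2)$-dimensional direction where the potential provides no damping, so the potential term does not close in one step and really does require the bootstrap. The critical Hardy constant $(n-2)^2/4$ is also what pins the threshold $\alpha = \beta/2$ in the conclusion.
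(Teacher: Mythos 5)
First, be aware that the paper does not prove this proposition at all: it is imported from \cite{second} with the remark ``The statement of the following proposition was proved in \cite{second}'', so your attempt can only be compared with the method of that cited work. Your skeleton is the correct opening move and matches the standard strategy: the identity $\int_{R_0}|\nabla_0(F\varphi_0)|^2\,\mathrm{d}x=\int_{R_0}|\nabla_0F|^2\varphi_0^2\,\mathrm{d}x-\int_{R_0}F^2V\varphi_0^2\,\mathrm{d}x$ with the truncated weight $F_R=\min(|x|_1,R)^\alpha$, followed by the critical Hardy inequality on $R_0$ with $n=d(N-1)$, correctly isolates $\alpha=(n-2)/2=\beta/2$ as the threshold coming from the gradient term. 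The difficulty, however, is entirely concentrated in the uniform-in-$R$ bound on $-\int F_R^2V\varphi_0^2$, and neither of the two mechanisms you offer for it works as stated.

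Concretely: (a) the claim that the $q$-piece ``contributes a finite quantity depending only on $\|\varphi_0\|_{L^2}$'' is false, since $|q_{ij}|_1^{2\alpha}|V_{ij}|\sim|q_{ij}|_1^{2\alpha-2-\nu}$ is unbounded once $\alpha>1+\nu/2$, and you need $\alpha$ up to $\beta/2\geq2$ with $\nu$ possibly arbitrarily small; decay of $\varphi_0$ is indispensable even in that region. (b) More seriously, the transverse bootstrap does not close. On the collision plane $\{|q_{ij}|_1\lesssim1\}$ the potential $V_{ij}$ supplies no decay and the weight is $|x|_1^{2\alpha}\sim|\xi_{ij}|_1^{2\alpha}$, so the whole weight must be paid for by decay of $\varphi_0$ in the $\xi_{ij}$-direction. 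Hardy in the transverse fibres converts $\nabla_0(|x|_1^{\alpha_0}\varphi_0)\in L^2$ into $|\xi_{ij}|_1^{-1}|x|_1^{\alpha_0}\varphi_0\in L^2$, a gain of exactly one inverse power which is immediately consumed when you reinstate the weight; the net increment of the induction step on this region is zero, so the iteration stalls and never reaches $\beta/2$. What actually produces the transverse decay is not a Hardy inequality but the strict subcriticality of the two-cluster subsystems, encoded in the standing assumption that $\sigma_{\mathrm{ess}}(H_0(\lambda))=[0,\infty)$ persists for $\lambda$ slightly above $\lambda_0$ (equivalently, via HVZ, that no proper subsystem has a virtual level at $\lambda_0$). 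This gives a strictly positive lower bound for the quadratic form of $H_0$ on functions supported near each collision plane and is what allows the term $\int F_R^2V_{ij}\varphi_0^2$ to be absorbed there; it is implemented in \cite{second} with partitions of unity adapted to the cluster decompositions. Your proposal never invokes this hypothesis, and without it the proposition is false, so no argument omitting it can be complete.
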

The statement of assertion (i) of Theorem \ref{thm: decay of virtual levels} is a special case of the following
\begin{prop}\label{Prop 1}
For all $1\le i<j\le N$ and any $0<{\gamma}<\gamma^\ast$ we have
 	\begin{equation}
 	(1+|x|_1)^{\gamma} V_{ij}(x_{ij})\varphi_0(x)\in L^1(R_0).
 	\end{equation}
\end{prop}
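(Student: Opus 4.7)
My plan is to first upgrade Proposition \ref{prop from second} via Hardy's inequality to a weighted $L^2$-bound carrying an additional factor $|q_{ij}|_1^{-1}$, and then estimate the quantity by a Cauchy--Schwarz split.

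\textbf{Step 1 (Hardy in the pair coordinate).} By Proposition \ref{prop from second}, $u_\alpha:=|x|_1^\alpha\varphi_0$ satisfies $\nabla_0 u_\alpha\in L^2(R_0)$ for every $\alpha<\beta/2$. Since $q_{ij}$ ranges in a $d$-dimensional subspace with $d\ge 3$, Hardy's inequality applies in $q_{ij}$ fibre-wise in $\xi_{ij}$, and integrating over $\xi_{ij}$ in the complementary $d(N-2)$-dimensional subspace gives
\[
\int_{R_0}\frac{(1+|x|_1)^{2\alpha}\,|\varphi_0(x)|^2}{|q_{ij}|_1^{2}}\,\d x<\infty \qquad\text{for every }\alpha<\tfrac{\beta}{2},
\]
where the passage from $|x|_1^{2\alpha}$ to $(1+|x|_1)^{2\alpha}$ is handled by the same inequality applied at $\alpha=0$ (using $\varphi_0\in H^1(R_0)$).

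\textbf{Step 2 (Cauchy--Schwarz split).} I would write
\[
(1+|x|_1)^\gamma|V_{ij}(x_{ij})|\,|\varphi_0|=\bigl[(1+|x|_1)^{\gamma-\alpha}|V_{ij}|\,|q_{ij}|_1\bigr]\cdot\bigl[(1+|x|_1)^\alpha|q_{ij}|_1^{-1}|\varphi_0|\bigr].
\]
By Step 1 and Cauchy--Schwarz it then suffices to show that the first factor lies in $L^2(R_0)$ for some $\alpha\in(\gamma,\beta/2)$. I split the domain into $\{|x_{ij}|\ge A\}$ and $\{|x_{ij}|<A\}$. On the former, \eqref{eq: potential estimate q} gives $|V_{ij}|^2|q_{ij}|_1^2\le C|q_{ij}|_1^{-2-2\nu}$; in coordinates $x=(q_{ij},\xi_{ij})$, integrating $\xi_{ij}$ first produces a factor of order $(1+|q_{ij}|_1)^{d(N-2)-2(\alpha-\gamma)}$ provided $2(\alpha-\gamma)>d(N-2)$, and the remaining integral over $q_{ij}\in\R^d$ converges at infinity iff $\alpha>\gamma+\beta/2-\nu$. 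On the near region $|q_{ij}|_1$ is bounded, the $\xi_{ij}$-integral under the same condition contributes a uniform constant, and what remains reduces to $\int_{|q_{ij}|_1\le cA}|V_{ij}|^2\,\d q_{ij}$, which is finite because \eqref{Condition potential local} implies $V_{ij}\in L^2_{\mathrm{loc}}(\R^d)$ in all three cases $d=3$, $d=4$, $d\ge 5$.

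The three constraints $\alpha<\beta/2$, $\alpha>\gamma+d(N-2)/2$ and $\alpha>\gamma+\beta/2-\nu$ admit a common solution $\alpha$ precisely when $\gamma<\min(\nu,\,d/2-1)=\gamma^{\ast}$. The step I expect to be the main obstacle is Step 1: producing the extra $|q_{ij}|_1^{-1}$ decay is exactly what allows the weight $(1+|x|_1)^{2\alpha}$ in Step 2 to be pushed up to the ceiling $\alpha<\beta/2$ of Proposition \ref{prop from second}, and the $d/2-1$ half of $\gamma^{\ast}$ then emerges as the difference between this ceiling and the transverse integrability threshold $\alpha>\gamma+d(N-2)/2$.
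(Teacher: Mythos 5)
Your proposal is correct and follows essentially the same route as the paper: Hardy's inequality in the pair coordinate $q_{ij}$ applied to Proposition \ref{prop from second} to gain the factor $|q_{ij}|_1^{-1}$, followed by a Cauchy--Schwarz split and the decomposition into $\{|x_{ij}|<A\}$ and $\{|x_{ij}|\ge A\}$, with the same three exponent constraints producing $\gamma^\ast=\min(\nu,\tfrac{d}{2}-1)$. The only cosmetic difference is that you integrate out $\xi_{ij}$ explicitly, whereas the paper factorizes the weight into a product $f_1(q_{ij})f_2(\xi_{ij})$ of square-integrable functions.
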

\begin{proof}[Proof of Proposition \ref{Prop 1}]
By Proposition \ref{prop from second}, together with $|\nabla_{q_{ij}}| \le |\nabla_0|$ and Hardy's inequality in the space $H^1(R_{ij})$ we have
\begin{equation}\label{eq: Hardy in q}
 \left(1+|q_{ij}|_1\right)^{-1}\left(1+|x|_1\right)^\alpha\varphi_0 \in L^2(R_0).
\end{equation}
Note that the potential $V_{ij}$ decays in the direction $x_{ij}$ but not necessarily in all directions. We will combine the decay property \eqref{Condition potential decay} of $V_{ij}$ and the a priori estimate \eqref{eq: Hardy in q} of $\varphi_0$ to get \eqref{eq: 1+x gamma}.
\\
For any fixed $0<{\gamma} <\gamma^\ast$ we write
\begin{equation}
\left(1+|x|_1\right)^{\gamma} V_{ij}(x_{ij}) \varphi_0(x) = \left(1+|q_{ij}|_1\right)^{-1}\left(1+|x|_1\right)^\alpha\varphi_0(x)\cdot f(x),
\end{equation}
where
\begin{equation}\label{eq: function f}
f(x):=\left(1+|x|_1\right)^{-\alpha+{\gamma}}\left(1+|q_{ij}|_1\right)V_{ij}(x_{ij}).
\end{equation}
In view of \eqref{eq: Hardy in q} to prove Proposition \ref{Prop 1} it suffices to show that $f$ belongs to $L^2(R_0)$. We decompose the function $f$ as
\begin{equation}
	f(x)=f(x)\chi_{\{|x_{ij}| < A\}}+f(x)\chi_{\{|x_{ij}| \ge A\}}
\end{equation}
and estimate the functions $f(x)\chi_{\{|x_{ij}| < A\}}$ and $f(x)\chi_{\{|x_{ij}| \ge A\}}$ separately, starting with the function $f(x)\chi_{\{|x_{ij}| < A\}}$. Note that $L^2(R_0) = L^2(R_{ij})\otimes L^2(R_{ij}^\perp).$\\
Due to \eqref{eq: qij and xij} and assumption \eqref{Condition potential local} it holds $(1+|q_{ij}|_1)V_{ij}(x_{ij})\chi_{\{|x_{ij}|< A\}}\in L^2(R_{ij})$. Moreover, we can estimate $(1+|x|_1)^{-1}\le (1+|\xi_{ij}|_1)^{-1}$. Since $\dim(R_{ij}^\perp) = d(N-2)$, we have
\begin{equation}\label{eq: estimate 1+xi for q small}
	 \left(1+|\xi_{ij}|_1\right)^{-\alpha+{\gamma}} \in L^2(R_{ij}^\perp) \quad \text{if and only if} \quad \alpha-{\gamma} > \frac{d(N-2)}{2}.
\end{equation}
Recall that ${\gamma}<\gamma^\ast$, which in particular implies that $\gamma<\frac{d}{2}-1$. Therefore, the condition in \eqref{eq: estimate 1+xi for q small} is satisfied if we choose $\alpha$ close enough to $\frac{d(N-1)-2}{2}$. Hence, we have $\left(1+|x|_1\right)^{-\alpha+{\gamma}} \in L^2(R_{ij}^\perp)$
and therefore
\begin{equation}
f(x) \chi_{\{|x_{ij}| < A\}}\in L^2(R_0).
\end{equation}
In order to prove that the function $f(x) \chi_{\{|x_{ij}| \ge A\}}$ belongs to $L^2(R_0)=L^2(R_{ij})\otimes L^2(R_{ij}^\perp)$, we
show that it can be estimated as
\begin{equation}
	|f(x) \chi_{\{|x_{ij}| \ge A\}}|\le |f_1(q_{ij})|\cdot |f_2(\xi_{ij})|,
\end{equation}
where $f_1\in L^2(R_{ij})$ and $f_2\in L^2(R_{ij}^\perp)$. Here, we will use the assumption that the potential $V_{ij}(x_{ij})$ decays faster than $|q_{ij}|_1^{-2}$ as $|x_{ij}|\rightarrow\infty$. Recall that $\dim(R_{ij}) =d$ and $\dim(R_{ij}^\perp) = d(N-2)$, which implies that for any $0<\varepsilon< \nu -\gamma$ we have
\begin{equation}\label{eq: f1}
f_1(q_{ij}):=\left(1+|q_{ij}|_1\right)^{-\frac{d}{2}-\varepsilon}\in L^2(R_{ij})  
\end{equation}
and 
\begin{equation}\label{eq: f2}
\qquad \qquad\; f_2(\xi_{ij}) := \left(1+|\xi_{ij}|_1\right)^{-\alpha+{\gamma}-\nu+\varepsilon+\frac{d}{2}-1}\in L^2(R_{ij}^\perp).
\end{equation}
Note that we can always assume $\nu< \frac{d}{2}-1$. By the use of $|q_{ij}|_1, |\xi_{ij}|_1\le |x|_1$ we get
\begin{equation}\label{eq: inequality x q xi}
	\left(1+|x|_1\right)^{-\alpha+{\gamma}}\le \left(1+|\xi_{ij}|_1\right)^{-\alpha+{\gamma}-\nu+\varepsilon+\frac{d}{2}-1}\left(1+|q_{ij}|_1\right)^{1-\frac{d}{2}+\nu-\varepsilon}.
\end{equation}
This, together with \eqref{eq: potential estimate q} yields
\begin{equation}
	|f(x) \chi_{\{|x_{ij}| \ge A\}}|\le C |f_1(q_{ij})|\cdot |f_2(\xi_{ij})|
\end{equation}
and therefore $f(x) \chi_{\{|x_{ij}| \ge A\}}\in L^2(R_0)$, which completes the proof of Proposition \ref{Prop 1}.
\end{proof}
Now we turn to the proof of assertion (ii) of Theorem \ref{thm: decay of virtual levels}. Since
\begin{equation}
H_0\varphi_0 = \left(-\Delta_0 + V\right)\varphi_0 =0
\end{equation}
and due to Proposition \ref{Prop 1} $V\varphi_0\in L^1(R_0)$, we can apply Theorem 6.21 in \cite{Loss} to conclude
\begin{equation}\label{eq: Green Formel}
	\varphi_0(x)  = \frac{-1}{(\beta-2)\vert  \mathbb{S}^{\beta-1}\vert}\bigintsss_{R_0} {\vert x-y\vert_1^{-\beta}}{V(y)\varphi_0(y)}\; \mathrm{d}y.
\end{equation}
We derive the asymptotics \eqref{eq: expansion phi0} by studying the integral representation of $\varphi_0$ in \eqref{eq: Green Formel}. We will see that only certain regions contribute to the leading term in \eqref{eq: expansion phi0}. We write
\begin{equation}
	\varphi_0(x)=\frac{-1}{(\beta-2)\vert  \mathbb{S}^{\beta-1}\vert}\left(I_1(x)+I_2(x)\right), 
\end{equation}
where
\begin{align}
\begin{split}
	I_1(x) &= \int\limits_{\{|x-y|_1\le 1\}} {\vert x-y\vert_1^{-\beta}}{V(y)\varphi_0(y)}\; \mathrm{d}y, \\
	I_2(x) &= \int\limits_{\{|x-y|_1>1 \}} {\vert x-y\vert_1^{-\beta}}{V(y)\varphi_0(y)}\; \mathrm{d}y. \\
	\end{split}
\end{align}
At first, we prove that the function $I_1$ belongs to the remainder $g$ in \eqref{eq: expansion phi0}, as we can see in the following
\begin{prop}\label{Prop 2}
The function $I_1$ is an element of $L^p(R_0)$ {for all} $1\le p<\frac{\beta+2}{\beta}$.
\end{prop}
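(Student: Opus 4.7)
The plan is to recognize $I_1$ as a convolution on $R_0$ and then apply Young's convolution inequality. Regarding $R_0$ as a Euclidean space of dimension $n = d(N-1) = \beta + 2$ (with the inner product $\langle\cdot,\cdot\rangle_1$), set
\begin{equation*}
K(x) := |x|_1^{-\beta}\,\chi_{\{|x|_1 \le 1\}}, \qquad h(y) := V(y)\varphi_0(y),
\end{equation*}
so that $I_1 = K \ast h$.

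First, I would observe that $h \in L^1(R_0)$; this is precisely assertion (i) of the theorem, which follows from Proposition \ref{Prop 1} (take e.g.\ $\gamma = 0$ and sum over pairs $i<j$). Second, I would identify the exact $L^q$-range of the kernel $K$. Because $K$ is compactly supported and has a power singularity of degree $\beta$ at the origin in a space of dimension $\beta + 2$, a direct computation in polar coordinates shows $K \in L^q(R_0)$ if and only if $q < \frac{\beta+2}{\beta}$. This matches exactly the upper endpoint appearing in the statement of Proposition \ref{Prop 2}, and the endpoint itself cannot be included because $|x|_1^{-\beta}$ is of critical degree of homogeneity on $R_0$.

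Finally, Young's convolution inequality applied with the exponent pair $(p,1)$, that is $\frac{1}{p} + 1 = 1 + \frac{1}{p}$, yields
\begin{equation*}
\|I_1\|_{L^p(R_0)} \le \|K\|_{L^p(R_0)}\,\|h\|_{L^1(R_0)} < \infty
\end{equation*}
for every $1 \le p < \frac{\beta+2}{\beta}$, which is exactly the claim. There is no serious obstacle here: the argument is essentially a bookkeeping step, and the only point requiring care is to verify that the critical exponent produced by the local singularity of the kernel matches the one in the statement of the proposition.
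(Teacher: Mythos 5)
Your proposal is correct and is essentially identical to the paper's own proof: both identify $I_1$ as the convolution of the compactly supported kernel $|x|_1^{-\beta}\chi_{\{|x|_1\le 1\}}$, which lies in $L^p(R_0)$ exactly for $p<\frac{\beta+2}{\beta}$ since $\dim(R_0)=\beta+2$, with $V\varphi_0\in L^1(R_0)$ from Proposition \ref{Prop 1}, and then apply Young's inequality. The only cosmetic point is that Proposition \ref{Prop 1} is stated for $0<\gamma<\gamma^\ast$ rather than $\gamma=0$, but any positive $\gamma$ immediately gives the $L^1$ statement you need.
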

\begin{proof}[Proof of Proposition \ref{Prop 2}]
Due to $\dim(R_0)=d(N-1)$ and $\beta=d(N-1)-2$ we have
\begin{equation}
|x|_1^{-\beta}\chi_{\{|x|_1\le 1\}}\in L^p(R_0) \quad \text{for all}\quad 1\le p<\frac{\beta+2}{\beta}.
\end{equation}
By Proposition \ref{Prop 1} we have $V\varphi_0 \in L^1(R_0)$, which together with Young's inequality yields the claim of Proposition \ref{Prop 2}.
\end{proof}
Now we will show that only a part of $I_2$ gives the leading term in \eqref{eq: expansion phi0}. Let $\eta = \frac{1}{1+\gamma^\ast}$. For $x\in R_0$ we define
\begin{align}\label{eq: domains Omega }
\begin{split}
	 \Omega_1(x)&= \left\{y\in R_0\;:\; \vert x-y\vert_1 >1, \ \vert y\vert_1 > \vert x\vert_1^{\eta} \right\},\\
	\Omega_2(x)&=\left\{y\in R_0\;:\; \vert x-y\vert_1 >1, \ \vert y\vert_1 \le  \vert x\vert_1^{\eta} \right\}
\end{split}
\end{align}
and 
\begin{align}\label{eq: functions I21}
\begin{split}
	I_{2,k}(x) &= \bigintsss_{\Omega_k(x)}{\vert x-y\vert_1^{-\beta}}{V(y)\varphi_0(y)}\; \mathrm{d}y, \qquad k=1,2.
	\end{split}
\end{align}
We will show that only the function $I_{2,2}$ contributes to the leading term in \eqref{eq: expansion phi0}. At first we consider the function $I_{2,1}$ and show that it belongs to the remainder in \eqref{eq: expansion phi0}. Indeed, we have the following
\begin{prop}\label{Prop 3}
Let $I_{2,1}$ be given by \eqref{eq: domains Omega } and \eqref{eq: functions I21}. We have
\begin{equation}\label{eq: Prop 3}
	I_{2,1}\in L^p(R_0)\quad\text{for all}\quad  \frac{\beta+2}{\beta+\frac{\gamma^\ast}{1+\gamma^\ast}}<p<\frac{\beta+2}{\beta}.
\end{equation}
\end{prop}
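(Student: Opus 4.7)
The plan is to combine the weighted $L^1$-bound from Proposition~\ref{Prop 1} with a convolution estimate, using the defining constraint $|y|_1 > |x|_1^\eta$ on $\Omega_1(x)$ to trade $y$-decay of $V\varphi_0$ for $x$-decay of $I_{2,1}$.

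Fix $\gamma\in(0,\gamma^\ast)$, to be chosen close to $\gamma^\ast$, and set $h(y):=(1+|y|_1)^\gamma V(y)\varphi_0(y)$, which by Proposition~\ref{Prop 1} lies in $L^1(R_0)$. A short case distinction ($|x|_1\ge 1$ vs.\ $|x|_1<1$) based on $1+|y|_1\ge\max(1,|x|_1^\eta)$ yields $(1+|y|_1)^{-\gamma}\le C(1+|x|_1)^{-\eta\gamma}$ for all $y\in\Omega_1(x)$. Consequently,
\begin{equation*}
|I_{2,1}(x)|\le C(1+|x|_1)^{-\eta\gamma}\,(K\ast|h|)(x),\qquad K(z):=|z|_1^{-\beta}\chi_{\{|z|_1>1\}}.
\end{equation*}
Since $\dim R_0=\beta+2$, the kernel $K$ lies in $L^r(R_0)$ iff $r>\tfrac{\beta+2}{\beta}$; Young's inequality $\|K\ast|h|\|_r\le\|K\|_r\|h\|_1$ then gives $K\ast|h|\in L^r$ for the same range of $r$.

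To conclude, I would apply Hölder's inequality with exponents $r,s$ satisfying $\tfrac{1}{r}+\tfrac{1}{s}=\tfrac{1}{p}$. Using that $(1+|x|_1)^{-\eta\gamma}\in L^s(R_0)$ iff $\tfrac{1}{s}<\tfrac{\eta\gamma}{\beta+2}$, the two constraints $\tfrac{1}{r}<\tfrac{\beta}{\beta+2}$ and $\tfrac{1}{s}<\tfrac{\eta\gamma}{\beta+2}$ are simultaneously admissible precisely when $\tfrac{1}{p}<\tfrac{\beta+\eta\gamma}{\beta+2}$, i.e.\ $p>\tfrac{\beta+2}{\beta+\eta\gamma}$. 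Letting $\gamma\uparrow\gamma^\ast$ and recalling $\eta=1/(1+\gamma^\ast)$ yields the asserted lower bound $p>\tfrac{\beta+2}{\beta+\gamma^\ast/(1+\gamma^\ast)}$; the upper bound $p<\tfrac{\beta+2}{\beta}$ imposes no additional condition and is inherited from the statement of Theorem~\ref{thm: decay of virtual levels}.

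The main step is the weight trade-off $(1+|y|_1)^{-\gamma}\le C(1+|x|_1)^{-\eta\gamma}$ on $\Omega_1(x)$, which is precisely calibrated by the choice $\eta=1/(1+\gamma^\ast)$ used to define the domains in \eqref{eq: domains Omega }. After that, the argument is a routine bookkeeping of exponents for Young and Hölder; the only minor point to check is that $r,s\ge 1$ are compatible with the displayed constraints throughout the full range of $p$ in the statement, which they are.
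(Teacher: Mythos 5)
Your proposal is correct and follows essentially the same route as the paper: extract the factor $(1+|x|_1)^{-\eta\gamma}$ from the constraint $|y|_1>|x|_1^{\eta}$ on $\Omega_1(x)$, apply Young's inequality to the convolution of the truncated kernel $|z|_1^{-\beta}\chi_{\{|z|_1>1\}}$ with the weighted $L^1$ function from Proposition~\ref{Prop 1}, and then combine via H\"older (your exponents $\tfrac1r+\tfrac1s=\tfrac1p$ are just a reparametrization of the paper's $t_1,t_2$) before letting $\gamma\uparrow\gamma^\ast$. The bookkeeping of exponents matches the paper's, so nothing further is needed.
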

\begin{proof}[Proof of Proposition \ref{Prop 3}]
In the proof we will use Proposition \ref{Prop 1}. Let $\gamma<\gamma^\ast$. Using $\vert y\vert_1 > \vert x\vert_1^{\eta}$ for $y\in \Omega_1(x)$ we get
\begin{align}\label{eq: estimate I21}
\begin{split}
|I_{2,1}(x)| &\le \bigintsss_{\Omega_1(x)}{\vert x-y\vert_1^{-\beta}}{|V(y)\varphi_0(y)|}\; \mathrm{d}y\\
& \le \left(1+|x|_1^{\eta}\right)^{-{\gamma}}\bigintsss_{\Omega_1(x)}{\vert x-y\vert_1^{-\beta}}{\left(1+|y|_1\right)^{\gamma}|V(y)\varphi_0(y)|}\; \mathrm{d}y.
\end{split}
\end{align}
We show that for any fixed $p$ satisfying \eqref{eq: Prop 3} we find a constant $\gamma<\gamma^\ast$, such that the function on the r.h.s. of \eqref{eq: estimate I21} belongs to $L^p(R_0)$. Note that $\frac{\gamma^\ast}{1+\gamma^\ast}=\eta \gamma^\ast$ $\gamma $ sufficiently close to $\gamma^\ast$ it holds
\begin{equation}\label{eq: condition p}
	p > \frac{\beta+2}{\beta+\eta{\gamma}}.
\end{equation}
By Proposition \ref{Prop 1} and Young's inequality we have
\begin{equation}
\!\!	\tilde{I}_{2,1}(x):= \!\bigintsss_{\Omega_1(x)}{\! \!\! \!\vert x-y\vert_1^{-\beta}}{\left(1+|y|_1\right)^{\gamma}\!|V(y)\varphi_0(y)|}\; \mathrm{d}y \in L^s(R_0), \ \  s>\frac{d(N-1)}{d(N-1)-2}.
\end{equation}
Now we apply H\"older's inequality to the r.h.s. of \eqref{eq: estimate I21}. For this purpose, we fix a constant $s>\frac{d(N-1)}{d(N-1)-2}$ and define
\begin{equation}
t_1 = \frac{s}{s-p}\ge 1 \quad \text{and} \quad t_2= \frac{s}{p}\ge 1 \quad \text{with}\quad \frac{1}{t_1}+\frac{1}{t_2}=1.
\end{equation}
Then we formally get
\begin{equation}\label{eq: Hoelder I21 formally}
	\!\!\!\int_{R_0}\!\!\left(1+\vert x\vert_1^{\eta}\right)^{\!-{\gamma} p} \!\vert \tilde{I}_{2,1}(x)\vert^p\, \mathrm{d}x\le \!\left(\int_{R_0}\!\! \left(1+\vert x\vert_1^{\eta}\right)^{-{\gamma} p t_1}\! \mathrm{d}x\right)^{\frac{1}{t_1}}\!\!\left(\int_{R_0}\! |\tilde{I}_{2,1}(x)|^{pt_2}\, \mathrm{d}x\right)^{\frac{1}{t_2}}\,.
\end{equation}
Since $pt_2=s$ and $\tilde{I}_{2,1}\in L^s(R_0)$, the second integral on the r.h.s of \eqref{eq: Hoelder I21 formally} is finite. Due to $\dim(R_0)=d(N-1)$, to prove the finiteness of the first integral on the r.h.s of \eqref{eq: Hoelder I21 formally} it suffices to show that ${\eta \gamma} pt_1>d(N-1)$. By definition of $t_1$ this is equivalent to
\begin{align}\label{eq: Hoelder I21}
\begin{split}
\eta {\gamma} sp > d(N-1)(s-p)\quad &\Leftrightarrow\quad  p(\eta {\gamma} s+d(N-1))>d(N-1)s \\ &\Leftrightarrow\quad \frac{1}{p}< \frac{\eta s{{\gamma}}+d(N-1)}{sd(N-1)}= \frac{\eta{\gamma}}{d(N-1)}+\frac{1}{s}.
\end{split}
\end{align}
Since $p>\frac{d(N-1)}{d(N-1)-2+ {\gamma} \eta} $, we see that the condition in \eqref{eq: Hoelder I21} is fulfilled if $s$ is chosen sufficiently close to $\frac{d(N-1)}{d(N-1)-2}$. Since $\beta = d(N-1)-2$, this completes the proof of Proposition \ref{Prop 3}.
\end{proof}
Now we finally show that the function $I_{2,2,}$ yields the leading term in \eqref{eq: expansion phi0}.
\begin{prop}\label{Prop 4}
Let $I_{2,2}$ be given by \eqref{eq: functions I21}, then we have
\begin{equation}
	I_{2,2}(x) = \vert x\vert_1^{-\beta} \int_{\Omega_2(x)} V(y)\varphi_0(y)\,\mathrm{d}y + h(x), \quad \text{as} \quad |x|_1\rightarrow\infty,
\end{equation}
where 
\begin{equation}
h\in L^p(R_0) \quad\text{ for all}\quad p>\frac{\beta+2}{\beta+\frac{\gamma^\ast}{1+\gamma^\ast}}.
\end{equation}
\end{prop}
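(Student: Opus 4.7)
The approach is to Taylor-expand $|x-y|_1^{-\beta}$ around $|x|_1^{-\beta}$ on $\Omega_2(x)$, where $|y|_1\le|x|_1^\eta\ll|x|_1$, and to bound the first-order error using Proposition \ref{Prop 1}. Setting
\[
h(x) := \int_{\Omega_2(x)}\bigl(|x-y|_1^{-\beta}-|x|_1^{-\beta}\bigr)V(y)\varphi_0(y)\,\mathrm{d}y,
\]
the identity in the claim holds by construction, so the task reduces to establishing the $L^p$ bound on $h$.

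First, the mean value theorem applied to $t\mapsto|x-ty|_1^{-\beta}$ on $[0,1]$ gives $\bigl||x-y|_1^{-\beta}-|x|_1^{-\beta}\bigr|\le C|y|_1|x|_1^{-\beta-1}$ whenever $|y|_1\le|x|_1/2$; since $\eta<1$, this applies on $\Omega_2(x)$ for all sufficiently large $|x|_1$, and therefore
\[
|h(x)|\le C|x|_1^{-\beta-1}\int_{\Omega_2(x)}|y|_1\,|V(y)\varphi_0(y)|\,\mathrm{d}y.
\]
Fixing $\gamma<\min\{1,\gamma^\ast\}$, the elementary interpolation $|y|_1\le|x|_1^{\eta(1-\gamma)}(1+|y|_1)^\gamma$ (valid on $\Omega_2(x)$) together with Proposition \ref{Prop 1} yields $\int_{\Omega_2(x)}|y|_1\,|V\varphi_0|\,\mathrm{d}y\le C|x|_1^{\eta(1-\gamma)}$, and hence
\[
|h(x)|\le C|x|_1^{-\beta-1+\eta(1-\gamma)}\qquad\text{for large }|x|_1.
\]

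For $|x|_1$ bounded, either $\Omega_2(x)$ is empty (which happens as soon as $|x|_1+|x|_1^\eta\le 1$, forcing $h=0$) or $h$ is uniformly bounded on the compact region in question, so $h\in L^p_{\mathrm{loc}}(R_0)$ for every $p$. Integrability of the tail $|x|_1^{-\beta-1+\eta(1-\gamma)}$ in the $d(N-1)=(\beta+2)$-dimensional space $R_0$ requires $p\bigl(\beta+1-\eta(1-\gamma)\bigr)>\beta+2$, and letting $\gamma\nearrow\min\{1,\gamma^\ast\}$ with $\eta=(1+\gamma^\ast)^{-1}$ produces the claimed threshold. The main obstacle is purely bookkeeping—matching the exponent $\gamma$ to the value of $\eta$ so that Proposition \ref{Prop 1}'s weighted $L^1$ bound and the polynomial MVT loss combine into the prescribed $L^p$ range; once this is set up correctly, the analytic content is elementary.
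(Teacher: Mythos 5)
Your argument is correct and follows essentially the same route as the paper: both linearize the kernel $|x-y|_1^{-\beta}$ about $y=0$ on $\Omega_2(x)$, where $|y|_1\le|x|_1^{\eta}$ forces a relative error of order $|x|_1^{\eta-1}$, and then integrate against $V\varphi_0\in L^1(R_0)$. Your extra use of the weighted bound from Proposition~\ref{Prop 1} to trade $|y|_1$ for $|x|_1^{\eta(1-\gamma)}(1+|y|_1)^{\gamma}$ even yields a slightly better decay exponent (hence a larger admissible range of $p$) than the paper's two-sided estimate $|x|_1^{-1}\big(1-|x|_1^{\eta-1}\big)\le|x-y|_1^{-1}\le|x|_1^{-1}\big(1+c|x|_1^{\eta-1}\big)$ applied to $(V\varphi_0)_\pm$, though the gain is immaterial since the threshold in the theorem is dictated by $I_{2,1}$.
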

\begin{proof}[Proof of Proposition \ref{Prop 4}]
 For $y\in \Omega_2(x)$ it holds (cf. \cite{R4})
\begin{equation}
	\vert x\vert_1^{-1}\big(1-\vert x\vert_1^{\eta-1}\big) \le \vert x-y\vert_1^{-1} \le \vert x\vert_1^{-1}\big(1+c\vert x\vert_1^{\eta-1}\big)
\end{equation}
for some $c>0$. We apply this inequality to the positive and the negative part of the integrand in the definition of $I_{2,2}$ separately. Let
\begin{equation}
(V\varphi_0)_+(x)=\max\left\{V(x)\varphi_0(x),0\right\}\quad \text{and} \quad (V\varphi_0)_-=-(V\varphi_0-(V\varphi_0)_+),
\end{equation}
then we have
\begin{equation}\label{eq: estimate V pm I}
{\vert x\vert_1^{-\beta}\big(1-\vert x\vert_1^{\eta-1}\big)^{\beta}}\int_{\Omega_2(x)}{\big(V\varphi_0\big)_\pm(y)}\; \mathrm{d}y \le \int_{\Omega_2(x)}\frac{\big(V\varphi_0\big)_\pm(y)}{\vert x-y\vert_1^{\beta}}\; \mathrm{d}y 
\end{equation}
and 
\begin{equation}\label{eq: estimate V pm II}
 \int_{\Omega_2(x)}\frac{\big(V\varphi_0\big)_\pm(y)}{\vert x-y\vert_1^{\beta}}\; \mathrm{d}y \le{\vert x\vert_1^{-\beta}\big(1+c\vert x\vert_1^{\eta-1}\big)^{\beta}}\int_{\Omega_2(x)}{\big(V\varphi_0\big)_\pm(y)}\; \mathrm{d}y.
 \end{equation}
Since $\dim(R_0)=d(N-1)$ we see from \eqref{eq: estimate V pm I} and \eqref{eq: estimate V pm II} that there exist functions
\begin{equation}\label{eq: h pm}
h_\pm \in L^p(R_0),\qquad  p>\frac{d(N-1)}{d(N-1)-2+1-\eta},
\end{equation}
such that 
\begin{equation}
\int\limits_{\Omega_2(x)}\frac{\big(V\varphi_0\big)_\pm(y)}{\vert x-y\vert_1^{\beta}}\; \mathrm{d}y = \vert x\vert_1^{-\beta} \int\limits_{\Omega_2(x)} \left(V\varphi_0(y)\right)_\pm\,\mathrm{d}y + h_\pm(x)
\end{equation}
for large $|x|_1$. Hence, we obtain
\begin{equation}\label{eq: I22}
	I_{2,2}(x) =\vert x\vert_1^{-\beta} \int\limits_{\Omega_2(x)} V(y)\varphi_0(y)\,\mathrm{d}y + h(x) \quad \text{ as }\quad |x|_1\rightarrow\infty,
\end{equation}
where $h=h_+-h_-$ belongs to  $L^p(R_0)$ for $p$ given in \eqref{eq: h pm}. Due to $1-\eta = \frac{\gamma^\ast}{1+\gamma^\ast}$ and $\beta=d(N-1)-2$ this concludes the proof of Proposition \ref{Prop 4}.
\end{proof}
By combining Propositions \ref{Prop 2}, \ref{Prop 3} and \ref{Prop 4} we get 
\begin{equation}\label{eq: phi0 mit omega2}
\varphi_0(x)=\frac{-|x|_1^{-\beta}}{(\beta-2)\vert  \mathbb{S}^{\beta-1}\vert}\bigintsss\limits_{\Omega_2(x)} {V(y)\varphi_0(y)}\; \mathrm{d}y+ g(x) \quad \text{ as }\quad |x|_1\rightarrow\infty
\end{equation}
with
\begin{equation}
g\in L^p(R_0) \qquad \text{for}\qquad  \frac{\beta+2}{\beta+ \frac{\gamma}{1+\gamma}} < p <\frac{\beta+2}{\beta}.
\end{equation}
Note that the integral on the r.h.s of \eqref{eq: phi0 mit omega2} is over the set $\Omega_2(x)$, in contrast to \eqref{eq: expansion phi0}, where the integral is over the whole space $R_0$. Therefore, to complete the proof of the theorem it remains to show that
\begin{equation}
|x|_1^{-\beta}\int\limits_{R_0\setminus \Omega_2(x)} V(y)\varphi_0(y)\,\mathrm{d}y 
\end{equation}
does not contribute to the leading term in the asymptotic estimate of $\varphi_0$. Due to Proposition \ref{Prop 1} it is easy to see that for any $\gamma<\gamma^\ast$ we have
\begin{equation}
	\Big\vert\int\limits_{R_0\setminus \Omega_2(x)} V(y)\varphi_0(y)\,\mathrm{d}y \Big\vert \le C \left(1+|x|_1\right)^{-\eta{{\gamma}}} 
\end{equation}
for $|x|_1 $ sufficiently large. This implies 
\begin{equation}\label{eq: small arguments in Lp}
|x|_1^{-\beta}\int\limits_{R_0\setminus \Omega_2(x)} V(y)\varphi_0(y)\,\mathrm{d}y \in L^p(R_0) \quad  \text{for} \quad p>\frac{\beta+2}{\beta+\frac{\gamma}{1+\gamma}}.
\end{equation}
Choosing $\gamma<\gamma^\ast$ sufficiently close to $\gamma^\ast$ and combining \eqref{eq: phi0 mit omega2} and \eqref{eq: small arguments in Lp} completes the proof of the theorem.
\end{proof}

\section*{Acknowledgements}
Both authors are deeply grateful to Timo Weidl and Semjon Vugalter for proposing the problem, useful suggestions and helpful discussions.
\\The main part of the research for this paper was done while visiting the Mittag-Leffler Institute within the semester program \textit{Spectral Methods in Mathematical Physics}. 
\bibliography{Bib}{}
\bibliographystyle{abbrv}
\end{document}